\providecommand{\SINR}{\ensuremath{\mathsf{SINR}}}
\begin{document}
\ninept

\title{Fast Independent Vector Extraction by Iterative SINR Maximization}

\name{Robin Scheibler and Nobutaka Ono%
\thanks{
This research was supported by JSPS KAKENHI Grant Numbers 17F17049, 16H01735 and JST CREST Grant Number JPMJCR19A3.
}%
\thanks{
Code and data to reproduce the results of this paper are available at \protect\url{https://github.com/onolab-tmu/code_2020ICASSP_five}.%
}}
\address{Tokyo Metropolitan University, Tokyo, Japan}

%

\maketitle

\begin{abstract}
	We propose fast independent vector extraction (FIVE), a new algorithm that blindly extracts a single non-Gaussian source from a Gaussian background.
  The algorithm iteratively computes beamforming weights maximizing the signal-to-interference-and-noise ratio for an approximate noise covariance matrix.
  We demonstrate that this procedure minimizes the negative log-likelihood of the input data according to a well-defined probabilistic model.
  The minimization is carried out via the auxiliary function technique whereas, unlike related methods, the auxiliary function is globally minimized at every iteration.
	Numerical experiments are carried out to assess the performance of FIVE.
  We find that it is vastly superior to competing methods in terms of convergence speed, and has high potential for real-time applications.
\end{abstract}
\begin{keywords}%
Independent Vector Extraction, Auxiliary Function, Blind Source Separation, Maximum SINR Beamforming, Gaussian Background
\end{keywords}


\if0
\section*{Alternative Titles}
\begin{itemize}
	\item Blind Source Extraction by Iterative SINR Maximization
  \item Blind Maximum SINR Beamforming
  \item Independent Vector Extraction by Iterative Maximization of the SINR
  \item Iterative Maximum SINR Beamforming for Independent Vector Extraction
  \item Blind Maximum SINR Beamforming
  \item Blind Maximum SINR Beamforming for Source Extraction
  \item Independent Vector Extraction via Iterative Maximum SINR Beamforming
	\item Auto-beamforming by iterative SINR maximization
	\item Blind Source Extraction by Successive SINR Maximization
	\item Iterative SINR Maximization is Blind Extraction
\end{itemize}
\fi

\section{Introduction}

Blind source extraction (BSE) aims at separating a single target signal from background noise without any prior information~\cite{Statistics:1985tc,Cardoso:1993ii}.
While BSE predates independent component analysis (ICA)~\cite{Comon:1512057}, the two problems are tightly related~\cite{Amari:1998cz,CrucesAlvarez:2004hq,Javidi:2010if}.
It can be seen as a blind source separation (BSS) problem where only one source is retrieved.
BSE most often relies on the independence of a non-Gaussian source contrasted to a Gaussian~\cite{Koldovsky:fn}, or non-Gaussian~\cite{Koldovsky:2018bw}, background.

Our focus is on audio applications where the mixture is typically convolutive.
In this case, the frequency domain BSS framework \cite{Smaragdis:1998kl} can be leveraged to transform the convolution into point-wise multiplication in the time-frequency domain.
Then, BSE can be applied in parallel to all the narrowband channels.
Done directly, however, this may lead to a problem whereas different sources are extracted at different frequencies.
This is known as the permutation ambiguity problem in ICA~\cite{Sawada:fk}.
An elegant solution is to consider multivariate probability distributions over frequencies, giving rise to the independent vector analysis (IVA)~\cite{Hiroe:2006ib,Kim:2006ex} and extraction (IVE)~\cite{Koldovsky:fn} paradigms in BSS and BSE, respectively.
For BSE, OGIVE, a gradient-ascent algorithm, was proposed and shown to be effective~\cite{Koldovsky:fn}.
However, the speed and convergence guarantees of gradient methods are limited.
Recently, we introduced OverIVA, an algorithm for overdetermined BSS with fast and guaranteed convergence.
OverIVA assumes a super-Gaussian model for sources and a Gaussian background.
The algorithm is obtained by applying the auxiliary function optimization method, similarly to AuxIVA~\cite{Ono:2011tn}, with the addition of orthogonality constraints between the target and background signals~\cite{Cardoso:1994wj}.
This algorithm is applicable to IVE.

In this paper, we introduce a new algorithm for fast IVE (FIVE).
FIVE iteratively applies maximum signal-to-interference-and-noise ratio ($\SINR$) beamforming~\cite{VanTrees:2002ec} to improve upon an initial estimate of the target signal.
We show that this deceptively simple algorithm can be rigorously derived by minimizing the same cost function as OverIVA.
Whereas OverIVA relied on orthogonal constraints for the separation of the background, FIVE solves the minimization of the auxiliary function to its global minimum.
Interestingly, this is done by solving exactly a special case of the hybrid exact-approximate diagonalization (HEAD) problem~\cite{Yeredor:hr}.
Experiments reveal that FIVE is blazingly fast and only requires a few iterations to achieve over \SI{4}{\decibel} signal-to-distortion ratio (SDR) improvement.
Further investigation reveals that FIVE behaves similarly to OverIVA and OGIVE in the presence of a mismatched background.

The reminder of this paper is organized as follows.
\sref{background} introduces notation, signal model and maximum $\SINR$ beamforming.
FIVE is described informally in \sref{algorithm} and analyzed in \sref{analysis}.
The experiments are discussed in \sref{experiments}. \sref{conclusion} concludes.

\section{Background}
\seclabel{background}

\subsection{Signal Model and Notation}

We consider the mixture of a single source with an arbitrary background noise recorded by $M$ microphones.
In the time-frequency domain, our signal model is
\begin{equation}
  x_{mfn} = s_{mfn} + b_{mfn},
\end{equation}
where $x_{mfn}$, $s_{mfn}$, and $b_{mfn}$ are the short-time Fourier transforms (STFT)~\cite{Allen:1977in} of the microphone, target, and background signals, respectively.
The indices $f=1,\ldots,F$ and $n=1,\ldots,N$ are the discrete frequency bins and frames, respectively.
For convenience, we can group all microphone signals in the complex-valued vectors
\begin{equation}
	\vx_{fn} = \begin{bmatrix} x_{1fn} & \cdots & x_{Mfn} \end{bmatrix} \in \C^M.
\end{equation}
In the rest of the manuscript, we use lower and upper case bold letters for vectors and matrices, respectively.
Furthermore, $\mA^\top$, $\mA^\H$, $\det(\mA)$ and $\trace(\mA)$ denote the transpose, conjugate transpose, determinant and trace of matrix $\mA$, respectively.
Unless specified otherwise, indices $m$, $f$, and $n$ always take the ranges defined here.

\subsection{Maximum SINR Beamforming}

Beamforming addresses the problem of finding the optimal weight vectors $\vw_f$ to combine the microphone signals such that $\vw_f^\H \vx_{fn}$ is close to the target signals.
This is generally achieved by optimizing a well-chosen cost function.
One can define the signal-to-interference-and-noise ratio ($\SINR$)
\begin{equation}
	\SINR_f[\vw] = \frac{\vw^\H \mSigma^{(s)}_f \vw}{\vw^\H \mSigma^{(b)}_f \vw}
	\elabel{sinr}
\end{equation}
where $\mSigma^{(s)}_f$ and $\mSigma^{(b)}_f$ are the covariance matrices of target and background signals, respectively.
The Maximum $\SINR$ beamformer is the one maximizing \eref{sinr} \cite{VanTrees:2002ec}.
Note that, if the target and background are uncorrelated, replacing $\mSigma^{(s)}_f$ by $\mSigma^{(x)}_f$, the covariance matrix of the microphone signals, in \eref{sinr} only changes the ratio by a constant additive factor.

In practice, we must approximate the covariance matrices from the available data,
e.g. the estimate of $\mSigma^{(x)}_f$ is the sample covariance matrix of the input data
\begin{equation}
  \mC_f = \frac{1}{N} \sum_n \vx_{fn} \vx_{fn}^\H.
  \elabel{covmat}
\end{equation}
Now, provided a (possibly scaled) estimate $\mV_f \sim \mSigma_f^{(b)}$, we can obtain $\vw_f$ as the result of the following optimization,
\begin{equation}
	\vw_f = \underset{\vw \in \C^M}{\arg\max}\ \frac{\vw^\H \mC_f \vw}{\vw^\H \mV_f \vw} \approx \underset{\vw \in \C^M}{\arg\max}\ C_1 \SINR_f[\vw] + C_2,
	\elabel{gev_problem}
\end{equation}
where $C_1 > 0$ and $C_2$ are arbitrary constants.
The optimizer of \eref{gev_problem} is the generalized eigenvector corresponding to the largest generalized eigenvalue for the problem $\mC_f \vw = \lambda \mV_f \vw$.
However, finding a good estimate $\mV_f$ turns out to be a challenging problem, and the Maximum $\SINR$ beamformer is difficult to use in practice.

\section{Algorithm}
\seclabel{algorithm}

Suppose we are given an initial guess $\hat{s}_{fn}$ of the target signal, typically one of the microphone signals.
Then an (unscaled) estimate of the background covariance matrix is
\begin{equation}
  \mV_f = \frac{1}{N} \sum_n \varphi_n(r_n) \vx_{fn} \vx_{fn}^\H, \quad \forall f,
  \elabel{weighted_covmat}
\end{equation}
where $\varphi_n(r)\,:\,\R_+ \to \R$ is a, yet-to-be-defined, strictly decreasing function, and $r_n$
is the magnitude of the target signal estimate,
\begin{equation}
  r_{n} = \sqrt{\sum\nolimits_f |\hat{s}_{fn}|^2}, \quad \forall n.
  \elabel{activity}
\end{equation}
Due to $\varphi_n(r_n)$, the importance of target dominated frames, i.e. where $r_n$ is large, is reduced, while background dominated frames are emphasized.
We can now compute $\vw_f$ as in \eref{gev_problem} by solving a generalized eigenvalue problem.
Now, using the newly obtained demixing filter $\vw_f$, we update the target signal estimate
\begin{equation}
  \hat{s}_{fn} \gets \vw_{f}^\H \vx_{fn}, \quad \forall f, n.
\end{equation}
The procedure is then repeated until convergence, or for a fixed number of iterations.
Note that a normalization step is needed to keep the extracted signal scale under control.

A simple improvement to this algorithm is to pre-whiten the input signal so that $\mC_f = \mI$.
Then, solving \eref{gev_problem} only requires the computation of the smallest eigenvalue and corresponding eigenvector of $\mV_f$.
Pseudo-code for the final form of the algorithm is provided in~\algref{five}.
While the procedure just presented might seem ad-hoc, we show in the following section that it can be rigorously derived from the minimization of a well-chosen cost function, and that its convergence is, in fact, guaranteed.

\begin{algorithm}[t]
\SetKwInOut{Input}{Input}\SetKwInOut{Output}{Output}
\Input{Input signals $\{ \vx_{fn} \}$, Initial estimate $\{\hat{s}_{fn}\}$}
\Output{Extracted signal $\{ s_{fn} \}$ }
\DontPrintSemicolon
\# Input pre-whitening\;
$\tilde{\vx}_{fn} = \mQ^{-\H}_f \vx_{fn}$, with $\frac{1}{N} \sum_n \vx_{fn} \vx_{fn}^\H =\mQ_f^\H\mQ_f$, $\forall f$\;
\For{loop $\leftarrow 1$ \KwTo $\text{max. iterations}$}{
  $r_{n} \gets \sqrt{\sum_f |\hat{s}_{fn}|^2},\ \forall n$\;
  \For{$f \gets 1$ \KwTo $F$}{
    $\wt{\mV}_{f} \gets \frac{1}{N} \sum_n \varphi_n(r_n) \tilde{\vx}_{fn} \tilde{\vx}_{fn}^\H$\;
    Let $\lambda_M$ and $\vr_M$ be the smallest eigenvalue of $\wt{\mV}_f$ and corresponding eigenvector, respectively\;
    $\vw_f \gets \lambda_M^{-\frac{1}{2}} \vr_M$\;
    $\hat{s}_{fn} \gets \vw_{f}^\H \tilde{\vx}_{fn},\ \forall n$\;
  }
}
\caption{FIVE: Fast Independent Vector Extraction}
\label{alg:five}
\end{algorithm}

\section{Derivation}
\seclabel{analysis}

We turn now to the analysis of the derivation of \algref{five}.
We prove that it extracts the maximum likelihood source estimate under a well-defined probabilistic model.
We consider the source extraction problem as a special case of determined blind source separation.
More specifically, we want to find the $M\times M$ demixing matrix
\begin{equation}
	\mW_f = \begin{bmatrix} \vw_f & \mJ_f \end{bmatrix}^\H,
\end{equation}
where $\mJ_f \in \C^{M\times M-1}$ is such that $s_{fn} = \vw_f^\H \vx_{fn}$ is independent from $\vz_{fn} = \mJ_f^\H\vx_{fn}$.

\begin{theorem}
  Let the three following assumptions hold.
  \begin{itemize}
    \item Target signal and background are statistically independent.
    \item The source signal distribution is spherical super-Gaussian
      \begin{equation}
        p_{S_n}(s_{1n}, \ldots, s_{Fn}) \sim e^{-G_n\left(\sqrt{\sum_f s_{fn}}\right)},
      \end{equation}
      with $G_n\::\:\R_+ \to \R$, strictly increasing, differentiable, and such that $G_n^\prime(r)/r$ is strictly decreasing (see \cite{Ono:2011tn,Ono:2010hh} for details).
    \item The background is Gaussian with arbitrary covariance structure across channels, but uncorrelated over frequencies.
  \end{itemize}
	Then, \algref{five} with
	\begin{equation}
		\varphi_n(r) = \frac{G^{\prime}_n(r)}{2 r}
		\elabel{varphi_def}
	\end{equation}
	is guaranteed to converge to a stationary point of the negative log-likelihood of the observed signal.
  \label{theorem:1}
\end{theorem}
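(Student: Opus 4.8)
The plan is to exhibit \algref{five} as a majorization--minimization scheme for the negative log-likelihood in which, unlike in OverIVA, the surrogate is driven to its \emph{global} minimum at each step. First I would write the negative log-likelihood $\mathcal{L}$ of $\{\vx_{fn}\}$ under the stated model, using the complex change of variables $(s_{fn},\vz_{fn}) = \mW_f\vx_{fn}$: this contributes a Jacobian term $-N\sum_f\log|\det\mW_f|^2$; statistical independence of target and background splits the rest per frame into the source factor, contributing $\sum_n G_n(r_n)$ with $r_n^2 = \sum_f|\vw_f^\H\vx_{fn}|^2$, and the Gaussian background factor, which by the uncorrelatedness over frequency contributes $\sum_f\big(N\log\det\mSigma_f^{(z)} + \trace(\mSigma_f^{(z)-1}\sum_n\vz_{fn}\vz_{fn}^\H)\big)$ plus a constant. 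Profiling out $\mSigma_f^{(z)}$, whose minimizer is the sample covariance $\mJ_f^\H\mC_f\mJ_f$ of $\vz_{fn} = \mJ_f^\H\vx_{fn}$, and dividing by $N$ leaves, up to an additive constant, $\tfrac1N\mathcal{L} = \sum_f\big(-\log|\det\mW_f|^2 + \log\det(\mJ_f^\H\mC_f\mJ_f)\big) + \tfrac1N\sum_n G_n(r_n)$ --- exactly the cost minimized by OverIVA, with $\mC_f$ the sample covariance \eref{covmat}.

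Next I would construct the auxiliary function. Because $G_n^\prime(r)/r$ is strictly decreasing, the quadratic tangent majorizer $\widetilde{G}_n(r;r_0) = \tfrac{G_n^\prime(r_0)}{2r_0}(r^2 - r_0^2) + G_n(r_0)$ satisfies $\widetilde{G}_n(r;r_0) \ge G_n(r)$ with equality at $r = r_0$, which follows by checking that the derivative of the difference changes sign from negative to positive at $r_0$. Substituting $r_n^2 = \sum_f\vw_f^\H\vx_{fn}\vx_{fn}^\H\vw_f$ and summing over $n$ replaces the source term by $\sum_f\vw_f^\H\mV_f\vw_f$ plus a constant, where $\mV_f$ is precisely the weighted covariance of \eref{weighted_covmat} for the choice $\varphi_n(r) = G_n^\prime(r)/(2r)$ of \eref{varphi_def}. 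Hence $Q(\{\mW_f\};\{r_n\}) = \sum_f\big(-\log|\det\mW_f|^2 + \vw_f^\H\mV_f\vw_f + \log\det(\mJ_f^\H\mC_f\mJ_f)\big)$ majorizes $\tfrac1N\mathcal{L}$ and is tight at the current iterate.

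Then comes the key step --- minimizing $Q$ exactly. Working in the pre-whitened domain, $\mC_f = \mI$ and the problem decouples over $f$. For fixed $\vw_f$, the Schur-complement identity $|\det[\vw_f\ \mJ_f]|^2 = \det(\mJ_f^\H\mJ_f)\cdot\big\|\vw_f - \mJ_f(\mJ_f^\H\mJ_f)^{-1}\mJ_f^\H\vw_f\big\|^2$ gives $-\log|\det\mW_f|^2 + \log\det(\mJ_f^\H\mJ_f) = -\log\big\|\vw_f - \mJ_f(\mJ_f^\H\mJ_f)^{-1}\mJ_f^\H\vw_f\big\|^2 \ge -\log\|\vw_f\|^2$, with equality when the columns of $\mJ_f$ span $\vw_f^\perp$ (only the distance of $\vw_f$ to $\mathrm{span}(\mJ_f)$ enters, and it is maximal, equal to $\|\vw_f\|$, exactly when $\vw_f\perp\mathrm{span}(\mJ_f)$). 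What remains per frequency is $-\log\|\vw_f\|^2 + \vw_f^\H\mV_f\vw_f$; writing $\vw_f = t\,\widehat{\vw}$ with $\|\widehat{\vw}\| = 1$ and optimizing over $t > 0$ yields $t^2 = (\widehat{\vw}^\H\mV_f\widehat{\vw})^{-1}$ and objective value $\log(\widehat{\vw}^\H\mV_f\widehat{\vw}) + 1$, so $\widehat{\vw}$ is the eigenvector $\vr_M$ of $\mV_f$ for its smallest eigenvalue $\lambda_M$ and $\vw_f = \lambda_M^{-1/2}\vr_M$ --- precisely the update in \algref{five}. This is the closed-form solution of the single-non-Gaussian-source special case of HEAD, the whitening handling the exact part and the eigenvector the approximate part.

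Finally, tightness at the current point together with exact minimization of $Q$ yields $\mathcal{L}(\text{new}) \le N\,Q(\text{new}) \le N\,Q(\text{old}) = \mathcal{L}(\text{old})$, so $\mathcal{L}$ is non-increasing; being bounded below, the standard auxiliary-function argument gives convergence of $\mathcal{L}$ and of the iterates to a stationary point. I expect the main obstacle to be making the $\mJ_f$-minimization fully rigorous --- establishing the projection identity and that the infimum is attained rather than merely approached --- together with the usual MM technicalities: continuity of the eigenvector update, keeping $\mV_f$ non-singular (which holds since $\varphi_n>0$ for generic data) so that the smallest eigenpair is well defined and varies smoothly, and verifying that pre-whitening is an invertible reparametrization that preserves the set of stationary points.
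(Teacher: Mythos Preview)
Your argument is correct and reaches the same update $\vw_f=\lambda_M^{-1/2}\vr_M$, but the route differs from the paper's in two places. First, the paper keeps the background covariance $\mB_f$ as a fixed (unknown) parameter in the likelihood, so its auxiliary function $\calL_2$ contains $\trace(\mJ_f^\H\mC_f\mJ_f\mB_f^{-1})$; only after solving does it observe that $\mB_f$ never influences $\vw_f$. You instead profile out the background covariance at the outset, obtaining the OverIVA-style term $\log\det(\mJ_f^\H\mC_f\mJ_f)$, which makes the irrelevance of $\mB_f$ automatic. Second, the paper minimizes $\calL_2$ by writing the stationary equations as the HEAD-type system \eref{mod_head}, exhibiting all $M$ solutions via the eigendecomposition of $\wt{\mV}_f$, and then selecting the global minimizer by comparing the log-determinants across $k$. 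You bypass HEAD entirely: the Schur-complement identity collapses the $\mJ_f$-dependence to $-\log\|\vw_f\|^2$ in the whitened domain, after which a one-line Rayleigh-quotient argument gives the smallest eigenpair directly. Your path is shorter and more elementary; the paper's buys an explicit characterization of \emph{all} stationary points of the surrogate and the connection to the HEAD literature, which is part of the contribution they wish to highlight.
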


The rest of this section proves the theorem.
Based on the probabilistic model enounced in Theorem~\ref{theorem:1}, we can write explicitly the negative log-likelihood of the observed signal,
\begin{multline}
  \calL = -2N\sum_f \log|\det(\mW_f)| + \sum_{n} G_n\left(\sqrt{\sum\nolimits_f |\vw_f^\H\vx_{fn}|^2 }\right) \\
  + \sum_{fn} \vx_{fn}^\H \mJ_f \mB^{-1}_f \mJ_f^\H \vx_{fn},
  \elabel{cost_function}
\end{multline}
where $\mB = \Expect{\vz_{fn}\vz_{fn}^\H}$ is the covariance matrix of the background after demixing.
At this point, we will assume that $\mB$ is known or can be estimated.
As we will find out, it is in fact irrelevant.
The maximum likelihood estimate of the target signal is provided by minimizing \eref{cost_function} with respect to $\vw_f$ and $\mJ_f$.
While direct minimization of $\calL$ is hard due to the non-quadratic term in $\vw_f$, it can be done via the auxiliary function approach~\cite{Ono:2010hh,Ono:2011tn}.
We make use of an inequality for super-Gaussian sources to create a majorizing function of~\eref{cost_function}.
\begin{lemma}[from \cite{Ono:2010hh}]
  Let $G_n(r)$ be as defined in Theorem~\ref{theorem:1}. Then,
  \begin{equation}
  G_n(r) \leq G_n^\prime(r_0)\frac{r^2}{2 r_0} + \left( G_n(r_0) - \frac{r_0}{2} G_n^\prime(r_0)\right),
  \end{equation}
  with equality for $r = r_0$.
\end{lemma}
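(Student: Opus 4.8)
The plan is to prove the bound by introducing the gap function $\phi(r) := h(r) - G_n(r)$, where $h(r) = G_n^\prime(r_0)\frac{r^2}{2r_0} + \left(G_n(r_0) - \frac{r_0}{2}G_n^\prime(r_0)\right)$ denotes the right-hand side of the claimed inequality, and showing that $\phi$ attains a global minimum of value zero at $r = r_0$. First I would dispose of the equality case: substituting $r = r_0$ into the quadratic term gives $G_n^\prime(r_0)\frac{r_0^2}{2r_0} = \frac{r_0}{2}G_n^\prime(r_0)$, which cancels exactly against the $-\frac{r_0}{2}G_n^\prime(r_0)$ in the constant, leaving $h(r_0) = G_n(r_0)$ and hence $\phi(r_0) = 0$.

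Next, since $G_n$ is differentiable and $h$ is a smooth quadratic, $\phi$ is differentiable on $r > 0$, and I would compute and factor its derivative:
\[
  \phi^\prime(r) = \frac{G_n^\prime(r_0)}{r_0}\,r - G_n^\prime(r) = r\left(\frac{G_n^\prime(r_0)}{r_0} - \frac{G_n^\prime(r)}{r}\right).
\]
This factorization is the crux of the argument: because $r > 0$, the sign of $\phi^\prime(r)$ is controlled entirely by comparing $G_n^\prime(r)/r$ to its value at $r_0$.

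The key step — and the only place the hypotheses of Theorem~\ref{theorem:1} enter — is to invoke that $G_n^\prime(r)/r$ is strictly decreasing. For $0 < r < r_0$ this yields $G_n^\prime(r)/r > G_n^\prime(r_0)/r_0$, so the bracketed factor is negative and $\phi^\prime(r) < 0$; for $r > r_0$ the inequality reverses, giving $\phi^\prime(r) > 0$; and at $r = r_0$ the bracket vanishes. Thus $\phi$ is strictly decreasing on $(0, r_0)$ and strictly increasing on $(r_0, \infty)$, so $r_0$ is its unique global minimizer. Together with $\phi(r_0) = 0$, this gives $\phi(r) \geq 0$ for all $r > 0$, that is $G_n(r) \leq h(r)$, with equality exactly at $r = r_0$; the endpoint $r = 0$ is then covered by continuity.

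I do not expect a serious obstacle: the statement is a tangent-parabola majorization, and the monotonicity assumption on $G_n^\prime(r)/r$ is precisely what renders the gap function unimodal. The only point demanding mild care is confirming that strict monotonicity of $G_n^\prime(r)/r$ alone suffices, without appealing to a second derivative or to convexity of $G_n$ in $r$ directly; the derivative-sign argument above avoids any such detour, which is why I would organize the proof around $\phi$ rather than attempting to compare $G_n$ and $h$ pointwise.
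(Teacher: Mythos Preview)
Your argument is correct: defining the gap $\phi(r)=h(r)-G_n(r)$, verifying $\phi(r_0)=0$, and using the factorization $\phi'(r)=r\bigl(G_n'(r_0)/r_0 - G_n'(r)/r\bigr)$ together with the strict decrease of $G_n'(r)/r$ cleanly shows that $r_0$ is the unique global minimizer of $\phi$, hence $\phi\ge 0$ with equality only at $r_0$. The paper itself does not supply a proof of this lemma---it is quoted from \cite{Ono:2010hh}---so there is no ``paper's own proof'' to compare against; your derivation is exactly the standard one and would be an appropriate stand-in.
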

Then, the majorizing function $\calL_2$ is as follows
\begin{multline}
  \calL \leq \calL_2 = -2N\sum_f \log|\det(\mW_f)| + N \sum_f \vw_f^\H \mV_f \vw_f \\
  + N \sum_{f} \trace\left(\mJ_f^\H \mC_f \mJ_f \mB^{-1}_f\right) + \text{constant},
  \elabel{majorizer}
\end{multline}
with $\mV_f$, $\mC_f$, and $r_n$ defined in~\eref{weighted_covmat}, \eref{covmat}, and~\eref{activity}, respectively.
Then, the auxiliary function method (also known as majorization-minimization) consists in 
iteratively minimizing~\eref{majorizer} and recomputing $r_n$ based on the new demixing filter.
This method is guaranteed to converge to a stationary point of~\eref{cost_function}~\cite{Lange:2016wp}.
%
Equating the gradient of $\calL_2$ to zero leads to the following quadratic system of equations
\begin{equation}
  \begin{bmatrix} \vw_f^\H \\ \mJ_f^\H \end{bmatrix}
  \begin{bmatrix} \mV_f \vw_f & \mC_f \mJ_f \end{bmatrix}
  =
  \begin{bmatrix} 1 & \vzero^\top \\ \vzero & \mB_f \end{bmatrix}, \quad \forall f.
  \elabel{mod_head}
\end{equation}
We omit the index $f$ from here on for convenience.
This is a special case of the HEAD problem \cite{Yeredor:hr} where $M-1$ columns share the same matrix.
Although a general closed form solution of HEAD is unknown for $M > 2$, we show that~\eref{mod_head} can be solved exactly.
This is a generalization of the case $M=2$, presented in~\cite{Ono:2012wa}.
\providecommand{\mRbar}{\boldsymbol{\widebar{R}}}
\begin{proposition}
  Let $\mC$ and $\mB$, both Hermitian matrices, have decompositions $\mC = \mQ^\H \mQ$ and  $\mB = \mU^\H \mU$,
  and let $\lambda_1 \geq \ldots \geq \lambda_M$ and $\vr_1, \ldots, \vr_M$ be the eigenvalues and eigenvectors, respectively, of $\wt{\mV} = \mQ^{-\H} \mV \mQ^{-1}$.
  In addition, let $\mRbar_k$ be the $M\times M-1$ matrix whose columns are $\vr_{\ell}$, $\forall \ell \neq k$.
  Then, 
  \begin{equation}
    \vw = \frac{1}{\sqrt{\lambda_k}} \mQ^{-1} \vr_k,\quad \mJ = \mQ^{-1} \mRbar_k \mU, \elabel{head_sol}
  \end{equation}
  is a solution to~\eref{mod_head} for every $k=1,\ldots,M$.
\end{proposition}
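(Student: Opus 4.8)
The plan is to verify by direct substitution that the pair $(\vw,\mJ)$ in~\eref{head_sol} solves~\eref{mod_head} for each $k$; the idea behind the construction is that the change of variables by $\mQ^{-1}$ simultaneously turns $\mC$ into $\mI$ and $\mV$ into the Hermitian matrix $\wt{\mV}$, so aligning $\vw$ with one eigenvector of $\wt{\mV}$ and the columns of $\mJ$ with the remaining ones decouples every cross term. Concretely, I would first expand the block identity~\eref{mod_head} into its four constituent equations: the scalar normalization $\vw^\H\mV\vw=1$; the two orthogonality conditions $\vw^\H\mC\mJ=\vzero^\top$ and $\mJ^\H\mV\vw=\vzero$, which are genuinely distinct since one carries $\mC$ and the other $\mV$ and hence must both be checked; and the block equation $\mJ^\H\mC\mJ=\mB$. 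I would then record the ingredients used throughout: from $\mC=\mQ^\H\mQ$ one has $\mQ^{-\H}\mC\mQ^{-1}=\mI$; by definition $\mQ^{-\H}\mV\mQ^{-1}=\wt{\mV}$ with $\wt{\mV}\vr_k=\lambda_k\vr_k$; and, $\wt{\mV}$ being Hermitian, its eigenvectors may be taken orthonormal, $\vr_\ell^\H\vr_k=\delta_{\ell k}$, which yields at once $\mRbar_k^\H\mRbar_k=\mI$, $\mRbar_k^\H\vr_k=\vzero$, and $\vr_k^\H\mRbar_k=\vzero^\top$.

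Next I would substitute and simplify each condition in turn. The normalization is immediate: $\vw^\H\mV\vw=\lambda_k^{-1}\vr_k^\H\wt{\mV}\vr_k=\lambda_k^{-1}\lambda_k=1$. In $\vw^\H\mC\mJ$ the two whitening factors collapse through $\mQ^{-\H}\mC\mQ^{-1}=\mI$, leaving $\lambda_k^{-1/2}\vr_k^\H\mRbar_k\mU=\vzero^\top$; its companion $\mJ^\H\mV\vw$ becomes, by the eigenvalue relation, $\lambda_k^{1/2}\mU^\H\mRbar_k^\H\vr_k=\vzero$. Finally, $\mJ^\H\mC\mJ=\mU^\H\mRbar_k^\H(\mQ^{-\H}\mC\mQ^{-1})\mRbar_k\mU=\mU^\H\mRbar_k^\H\mRbar_k\mU=\mU^\H\mU=\mB$. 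This exhausts~\eref{mod_head}, so the claim holds for every $k$.

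I do not expect a real obstacle: once the whitening change of variables is in place the argument is pure bookkeeping. The two points that still deserve care are keeping the conjugate transposes straight across the two distinct orthogonality conditions, and checking that the formula is well posed --- that $\mC$ and $\mB$ admit invertible factors $\mQ$ and $\mU$ (both are positive definite in our setting, $\mB$ as a covariance and $\mC$ as a sample covariance for $N\ge M$), and that $\mV$, a weighted sample covariance with strictly positive weights $\varphi_n(r)=G_n'(r)/(2r)>0$, is Hermitian positive definite, so that $\wt{\mV}$, being congruent to $\mV$, has strictly positive eigenvalues and $\lambda_k^{-1/2}$ is a genuine real scalar.
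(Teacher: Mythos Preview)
Your proposal is correct and follows exactly the approach the paper indicates: direct substitution of~\eref{head_sol} into~\eref{mod_head}, using the whitening identity $\mQ^{-\H}\mC\mQ^{-1}=\mI$, the eigenrelation $\wt{\mV}\vr_k=\lambda_k\vr_k$, orthonormality of the eigenvectors of the Hermitian matrix $\wt{\mV}$, and the factorization $\mB=\mU^\H\mU$. Your write-up is in fact more explicit than the paper's one-line sketch, and the added remarks on well-posedness (positivity of $\lambda_k$ via congruence of $\wt{\mV}$ with the positive-definite $\mV$) are a sensible complement.
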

\begin{proof}
The proof follows by substituting \eref{head_sol} into \eref{mod_head}, applying the properties of the eigenvectors, and the decompositions.
\end{proof}
As we have just shown, there are $M$, possibly distinct, solutions to~\eref{mod_head}, corresponding to $M$ stationary points of~\eref{majorizer}.
\begin{proposition}
The global minimum of~\eref{majorizer} is given by the minimum eigenvalue, i.e. $k=M$.
\end{proposition}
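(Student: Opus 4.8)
The plan is to compute the majorizer $\calL_2$ of \eref{majorizer} explicitly at each of the $M$ candidate stationary points $(\vw,\mJ)$ supplied by the previous proposition, and to observe that two of its three non‑constant terms are in fact \emph{independent} of the index $k$, while the remaining one, the log‑determinant, is strictly increasing in the eigenvalue $\lambda_k$. Since $\lambda_1\geq\cdots\geq\lambda_M$, this forces the minimizing choice to be $k=M$.

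Concretely, I would fix a frequency bin, drop the index $f$, and substitute \eref{head_sol}. For the log‑determinant term, I would use the factorization
\begin{equation}
  \begin{bmatrix}\vw & \mJ\end{bmatrix}
  = \mQ^{-1}\,\begin{bmatrix}\vr_k & \mRbar_k\end{bmatrix}\,
  \begin{bmatrix}\lambda_k^{-1/2} & \vzero^\top \\ \vzero & \mU\end{bmatrix},
\end{equation}
noting that $\begin{bmatrix}\vr_k & \mRbar_k\end{bmatrix}$ is a column permutation of the eigenvector matrix of the Hermitian matrix $\wt{\mV}$, hence unitary with unit‑modulus determinant; this gives $|\det(\mW)| = |\det(\mQ)|^{-1}\lambda_k^{-1/2}|\det(\mU)|$, so $-2N\log|\det(\mW)| = N\log\lambda_k + c_1$ with $c_1$ depending only on $\mQ,\mU$. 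For the target‑power term, $\wt{\mV}\vr_k=\lambda_k\vr_k$ and $\|\vr_k\|=1$ collapse $N\vw^\H\mV\vw$ to $N$. For the background term, $\mC=\mQ^\H\mQ$ together with orthonormality of the eigenvectors, $\mRbar_k^\H\mRbar_k=\mI$, gives $\mJ^\H\mC\mJ=\mU^\H\mU=\mB$, so $N\trace(\mJ^\H\mC\mJ\mB^{-1})=N(M-1)$; both are independent of $k$. Summing over $f$, $\calL_2$ evaluated at the $k$‑th solution equals $N\sum_f\log\lambda_k^{(f)}$ plus a constant that does not depend on $k$, so the minimum over $k$ (taken independently at each frequency) is attained at the smallest eigenvalue, $k=M$. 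Because the weights $\varphi_n=G_n'/(2r)$ are positive, each $\mV_f$ is positive semidefinite and, under the usual rank assumption, positive definite, so the $\log\lambda_k^{(f)}$ are well defined and ordered; a coercivity remark — $\calL_2$ diverges as any $\mW_f$ degenerates and the quadratic terms dominate for large $\|\mW_f\|$ — then promotes this choice to the global minimum of \eref{majorizer}.

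I do not expect a genuine obstacle here: the argument reduces to the determinant factorization above plus the two ``collapse'' identities, all routine once the unitarity of the eigenvector matrix of $\wt{\mV}$ and the orthonormality of its columns are invoked. The one point that really needs care — and the reason the proposition holds — is recognizing that the log‑determinant is the \emph{only} term that distinguishes the $M$ stationary points and that it grows with $\lambda_k$; a minor secondary subtlety is ensuring positive definiteness of $\mV_f$ so that the eigenvalue ordering unambiguously singles out $k=M$.
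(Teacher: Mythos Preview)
Your proof is correct and follows essentially the same route as the paper: both recognize that among the $M$ candidate solutions of the previous proposition, the two quadratic terms of $\calL_2$ are independent of $k$, so only the log-determinant discriminates, and it is minimized at $k=M$. Your version is slightly more explicit---you invoke the unitarity of the eigenvector matrix of $\wt{\mV}$ to compute $|\det(\mW)|$ directly rather than via the paper's ratio identity, and you add a coercivity remark to justify that the best stationary point is indeed the global minimum---but the core argument is the same.
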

\begin{proof}
  Under the choice \eref{head_sol}, the only non-constant term in \eref{majorizer} is the log-determinant.
  All we need to show is that the determinant is maximized.
	Because $\mQ$ and $\mU$ are independent of $k$, and
	\begin{equation}
		\mW^\H = \mQ^{-1} \begin{bmatrix} \frac{1}{\sqrt{\lambda_M}} \vr_M & \mRbar_M \end{bmatrix}
		\begin{bmatrix} 1 & \vzero^\top \\ \vzero & \mU \end{bmatrix},
	\end{equation}
	we only need to focus on the determinant of the middle term.
	There,
	\begin{align}
		\left| \det \begin{bmatrix} \frac{1}{\sqrt{\lambda_M}} \vr_M & \mRbar_M \end{bmatrix} \right|
		= \sqrt{\frac{\lambda_k}{\lambda_M}} \left| \det \begin{bmatrix} \frac{1}{\sqrt{\lambda_k}} \vr_k & \mRbar_k \end{bmatrix} \right|,
		\ \forall k,
		\nonumber
	\end{align}
	and the proof follows because $\lambda_k / \lambda_M \geq 1$ for any $k$.
\end{proof}

There are two points left to obtain the final algorithm.
First, $\mQ$ never changes throughout the algorithm and corresponds to a whitening of the input data.
It can by applied once and for all at the beginning of the algorithm.
Further multiplications are thus avoided.
Finally, we are only interested in $\vw$.
Being never needed, computation of $\mJ$ is omitted, which makes knowledge or estimation of $\mB$ moot.

\begin{figure}
	\centering
	\includegraphics[width=\linewidth]{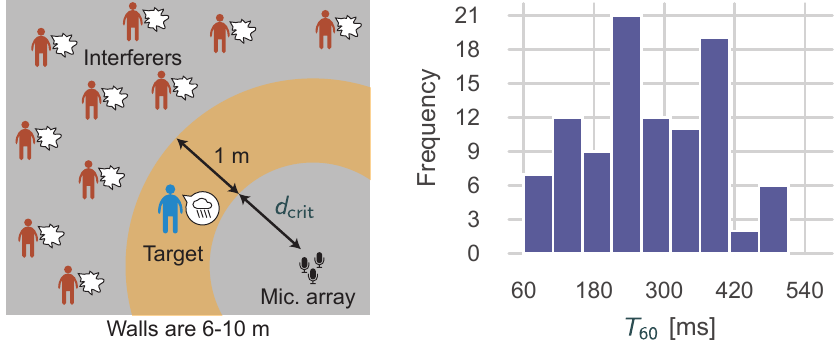}
	\caption{Right, the room setup for simulation. Left, the histogram of simulated reverberation times.}
	\flabel{room_setup}
\end{figure}

\begin{figure*}
	\centering
	\includegraphics[width=\linewidth]{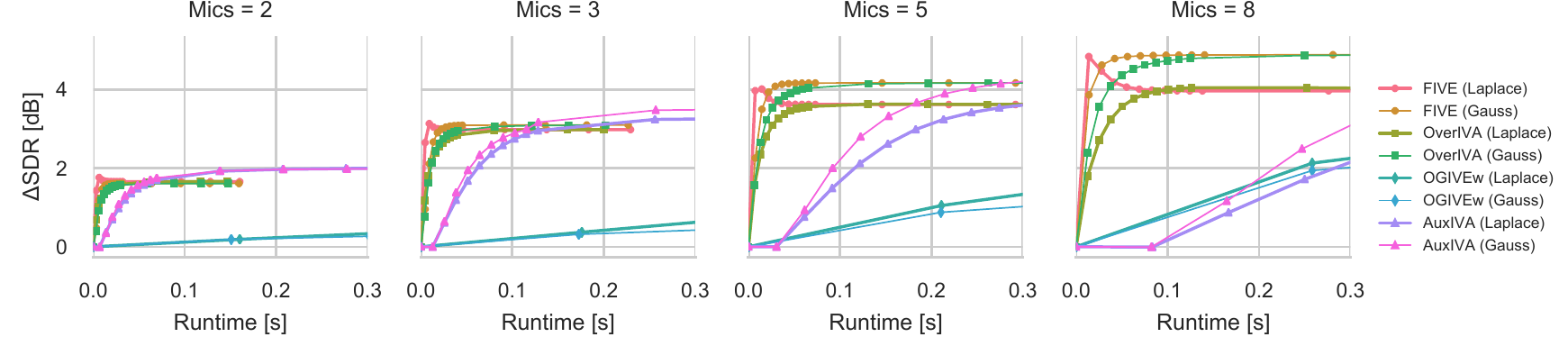}
	\caption{Mean convergence curves: SDR improvement as a function of the runtime for \SI{1}{\second} of input signal. From left to right, 2, 3, 5, and 8 microphones are used.}
	\flabel{conv_curve}
\end{figure*}

\section{Experiments}
\seclabel{experiments}

The performance of FIVE is assessed via simulations.
We study the convergence speed, the separation level after a few iterations, and the effect of mismatch in the background model.

\subsection{Experimental Setup}

We use the \texttt{pyroomacoustics} toolbox~\cite{Scheibler:2018di} to simulate 100 random rectangular rooms with walls between \SI{6}{\meter} and \SI{10}{\meter} and ceiling from \SI{2.8}{\meter} to \SI{4.5}{\meter} high.
Simulated reverberation times ($T_{60}$) range from \SI{60}{\milli\second} to \SI{540}{\milli\second}.
Sources and microphone array are placed at random at least \SI{50}{\centi\meter} away from the walls and between \SI{1}{\meter} and \SI{2}{\meter} high.
The array is circular and regular with 2, 3, 5, or 8 microphones, and radius such that neighboring elements are \SI{2}{\centi\meter} apart.
The distance from target source to array center is in $[d_{\text{crit}}, d_{\text{crit}}+1]$,  where the critical distance $d_{\text{crit}} = 0.057\sqrt{V/T_{60}}$, with $V$ the volume of the room~\cite{Kuttruff:2009uq}.
The $Q=10$ interferers are at least $d_{\text{crit}} + 1$ from the array.
We define $\SINR = \sigma_T^2/(Q \sigma_I^2 + \sigma_w^2)$, where $\sigma_T^2$ and $\sigma_I^2$ are the variance of target and interferers at the first microphone.
We fix $\SINR=$~\SI{5}{\decibel}.
The uncorrelated noise variance $\sigma_w^2$ is set to be \SI{1}{\percent} of the total noise-and-interference.
An illustration of the room setup and a histogram of the reverberation times are provided in~\ffref{room_setup}.
We use a 4096 points STFT with half-overlap and a Hamming window.

We compare FIVE to OverIVA~\cite{Scheibler:2019vx}, full AuxIVA~\cite{Ono:2011tn} with selection of the strongest output channel, and the gradient ascent based algorithm, OGIVE~\cite{Koldovsky:fn}.
The first three are run for 50 iterations, while the last one is run for 4000 iterations with step size 0.1, as specified in~\cite{Koldovsky:fn}.
We compare two source models: time-invariant Laplace and time-varying Gaussian.
Without going into details due to lack of space, these models lead to the weighting functions
\begin{equation}
	\varphi^{\text{Lap}}_n(r) = (2r)^{-1},\quad \text{and}\quad \varphi^{\text{Gau}}_n(r) = (r^2/F)^{-1},
\end{equation}
respectively.
The scale of the separated signals is restored by projection back onto the first microphone~\cite{Murata:2001gb}.
The separation performance is evaluated in terms of signal-to-distortion ratio (SDR) and signal-to-interference ratio (SIR)~\cite{Vincent:2006fz} using a popular toolbox~\cite{Raffel:2014uu}.

\begin{figure}
	\centering
	\includegraphics[width=\linewidth]{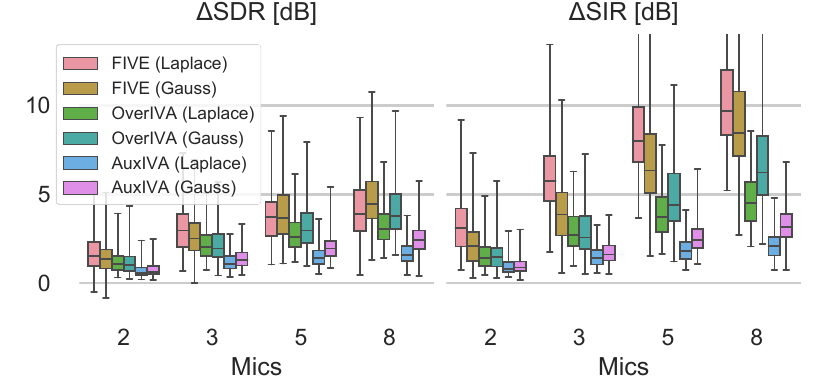}
	\caption{Box-plots of the SDR improvement after just three iterations.}
	\flabel{delta_sdr}
\end{figure}

\subsection{Convergence Speed}

\ffref{conv_curve} shows the mean evolution of SDR improvement ($\Delta$SDR) as a function of runtime.
The runtime is normalized per one second of input signal to gauge potential for real-time applications.
FIVE~(Laplace) is the fastest and reaches peak $\Delta$SDR in one to three iterations.
We observe however that the $\Delta$SDR subsequently decreases before reaching convergences.
Note that this is not a contradiction since the cost function~\eref{cost_function} is not the SDR.
We conjecture this to be due to a mismatch with the signal model.
In terms of speed, FIVE~(Gauss) is close behind but stably attains a larger $\Delta$SDR value.
OverIVA behaves similarly in terms of $\Delta$SDR, which is expected because it also minimizes~\eref{cost_function}.
Its convergence speed is about five times slower.
Doing full separation with AuxIVA improves $\Delta$SDR performance, likely due to a better modelling of the background as extra independent sound sources.
However, convergence is at least one order of magnitude slower, which hits particularly hard when using more microphones.
The gradient-based OGIVE converges at a much slower pace, but eventually reaches similar $\Delta$SDR values, although outside the limits of \ffref{conv_curve}.
We do admit, however, that its runtime might improve with a more careful implementation.

\subsection{Separation Performance after Three Iterations}

\begin{figure}
	\centering
  \includegraphics[width=\linewidth]{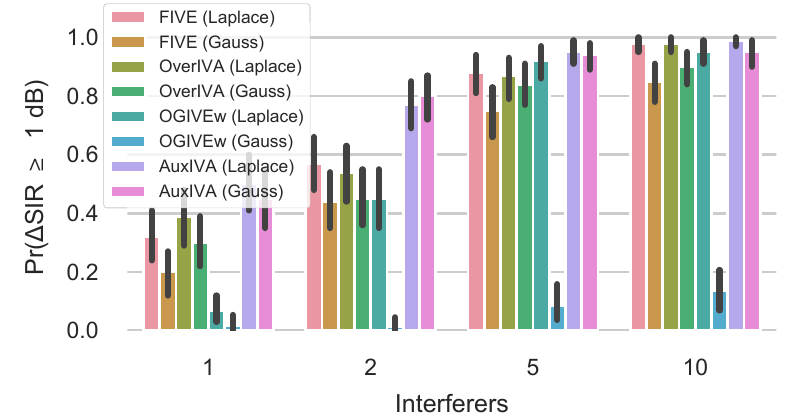}
	\caption{Success rate at $\SINR=$~\SI{0}{\decibel} and for different number of interferers.}
	\flabel{bg_mismatch}
\end{figure}

\ffref{delta_sdr} displays box-plots of the $\Delta$SDR and $\Delta$SIR of FIVE, OverIVA, and AuxIVA after three iterations.
We leave out OGIVE of the comparison because it was difficult to include it in a meaningful manner.
In all cases FIVE dominates, with OverIVA behind, and AuxIVA last.
Also recall that three iterations of AuxIVA is about ten times longer than the other two.
In general, time-varying Gauss model achieves higher $\Delta$SDR and Laplace model higher $\Delta$SIR.

\subsection{Effect of Background Model Mismatch}

In our experiments so far, the background has been composed of ten interference sources which is close to the Gaussian background assumption.
We now rerun the experiment with the conditions modified as follows.
The SINR is decreased to \SI{0}{\decibel} and the number of microphones set to three.
Then, we run the experiment with one, two, five, and ten interferers and measure the success rate of each algorithm.
The success is defined as $\Delta$SIR $\geq$ \SI{1}{\decibel}.
The experiment result is shown in \ffref{bg_mismatch}
When there is only one interferer, we expect all algorithms to fail because it is not possible to tell which source is the target without prior information.
Indeed, even AuxIVA is slightly lower than 0.5, meaning that it probably separates the sources, but picks the wrong one half of the time.
Other algorithms fail more often, which implies that separation itself fails.
As we increase the number of interferers and the background approach Gaussianity, the success rate of all algorithms increases, with the exception of OGIVE~(Gauss).
There is not much difference between FIVE and OverIVA, but AuxIVA, which does not assume a specific background model, performs markedly better.


\section{Conclusion}
\seclabel{conclusion}

We presented a deceptively simple algorithm for BSE which can be described as iterative maximization of the SINR.
The algorithm can be rigorously derived and its convergence is guaranteed.
In experiments we showed that the proposed algorithm is blazingly fast, only needing a few iterations, even for up to eight microphones.
In contrast, full IVA takes an order of magnitude longer, or more, to obtain the same SDR improvement.
However, our method assumes a Gaussian distributed background and its performance degrades, sometimes significantly, when this is not fulfilled.
Because BSE relies exclusively on the cost function for the extraction, a crucial next step is to identify a more suitable background model.
It should be flexible enough to accommodate a wide variety of conditions, yet offer good contrast between target and background, for example as in~\cite{Koldovsky:2018bw}.


\bibliographystyle{IEEEtran}
\bibliography{IEEEabrv,refs}

\if0
\begin{table}
    \centering
    \begin{tabular*}{\linewidth}{@{\extracolsep{\fill}}lccccccccc@{}}
        \toprule
         &       &          & \multicolumn{7}{c}{{\footnotesize Number of echoes $K$}} \\
         \cmidrule{4-10}
         & anechoic & learn & 0 & 1 & 2 & 3 & 4 & 5 & 6 \\
         \cmidrule{2-10}
         $\gamma = $ & $10$ & $10^{-1}$ & $10$ & $10^{-3}$ & 0 & 0 & 0 & 0 & 0 \\
         \bottomrule
    \end{tabular*}
    \caption{A great table}
    \tlabel{gamma}
    \vspace{-0mm}
\end{table}

\begin{figure*}
    \centering
    \subfloat[mu_univ][Test left]{\includegraphics[width=0.45\linewidth]{test.pdf}\label{fig:mu_univ}}
    \hfill
    \subfloat[mu_spkr][Test right]{\includegraphics{test.pdf}\label{fig:mu_spkr}}
    \caption{A figure with two subfigures}
    \flabel{results}
    \vspace{-0mm}
\end{figure*}
\fi

\end{document}